\newtheorem{lemma}{Lemma}
\theoremstyle{definition}
\theoremstyle{remark}
\newcommand{\mynewline}{\mbox{}\\}
\newcommand{\E}[1]{\mathbb{E} \left( #1 \right)}
\newcommand{\Es}[2]{\mathbb{E}_{#1} \left( #2 \right)}
\newcommand{\BRA}[1]{\left( #1 \right)}
\newcommand{\BRAs}[1]{\left\{ #1 \right \}}
\newcommand{\PR}[1]{Pr\left\{ #1 \right\}}
\newcommand{\PRs}[2]{Pr_{#1}\left\{ #2 \right\}}
\newcommand{\ie}{{\emph{i.e.}}}
\newcommand{\eg}{{\emph{e.g.}}}
\newcommand{\renyi}{ R{\'e}nyi }
\newcommand{\cramer}{ Cram{\'e}r }
\newcommand{\beress}{ Berry-Ess{\'e}en }
\newcommand{\elisgar}{ Ellis--G\"{a}rtner }
\newcommand{\Ind}[1]{ \mathds{1}_{\BRAs{#1}} }
\newcommand{\MIN}[1]{ \smash{\displaystyle\min_{#1}} }
\newcommand{\toinf} {\underset{n \rightarrow \infty}{\rightarrow}}
\newif\ifFullProofs
\newif\ifTwoCols
\begin{document}

\title{Variational formulas for the power of the binary hypothesis testing problem with applications}

\author{Nir~Elkayam ~~~~~~~Meir~Feder \\
        Department of Electrical Engineering - Systems\\
        Tel-Aviv University, Israel \\
        Email: nirelkayam@post.tau.ac.il, meir@eng.tau.ac.il}

\maketitle

\subsection*{\centering Abstract}
\textit{
Two variational formulas for the power of the binary hypothesis testing problem are derived. The first is given as the Legendre transform of a certain function and the second, induced from the first, is given in terms of the Cumulative Distribution Function (CDF) of the log-likelihood ratio. One application of the first formula is an upper bound on the power of the binary hypothesis testing problem in terms of the \renyi divergence. The second formula provide a general framework for proving asymptotic and non-asymptotic expressions for the power of the test utilizing corresponding expressions for the CDF of the log-likelihood. The framework is demonstrated in the central limit regime (\ie, for non-vanishing type I error) and in the large deviations regime.}

\section{Introduction}
A classical problem in statistics and information theory is the binary hypothesis testing problem where two distributions, $P$ and $Q$, are given. For each \emph{test}, we have two types of errors, namely the \emph{miss-detection} (type I) and the \emph{false-alarm} (type II) errors. According to the Neyman-Pearson lemma, the optimal test is based on thresholding the likelihood ratio between $P$ and $Q$.
The behavior of the optimal tradeoff between the two types of errors has been studied both in the asymptotic and non-asymptotic regimes, and in the central limit regime and large deviation regime.
Knowledge of the optimal tradeoff turns out to be useful for recent studies in finite block-length information theory, e.g., in channel coding
\cite[Section III.E]{polyanskiy2010channel},\cite{ElkayamITW2015} in data compression \cite{kostina2012fixed} and more.

Consider two probability measures $P$ and $Q$ on a sample space $W$\footnote{throughout this paper we assume discrete probability measure}. This paper provides two variational formulas for the power (or the optimal tradeoff) of the binary hypothesis problem between $P$ and $Q$. The first is given as the Legandere transform of the convex function (of $\lambda$):
$$ f(\lambda) = \lambda-\sum_{w\in W} \min\BRA{Q(w),\lambda P(w)}$$
and the second, derived from the first, is given as a function of the CDF of the log-likelihood ratio between $P$ and $Q$ with respect to $P$.

We use the first formula to derive a general upper bound on the power of any binary hypothesis testing problem in terms of the \renyi divergence between $P$ and $Q$. The second formula leads to a general framework for proving asymptotic and non-asymptotic bounds on the power of the binary hypothesis testing problem by plugging-in any approximation for the CDF of the log-likelihood ratio. The error term in the CDF approximation leads to a corresponding error term in the power of the binary hypothesis problem.
Specifically, by using the \beress theorem we get an approximation of the CDF up to an additive term, which results in an additive error term in the corresponding type I error. By using a large deviation approximation of the CDF, we get an approximation within a multiplicative term which results in a multiplicative error term in the corresponding type I error.

\section{Variational formula's for the Binary Hypothesis testing problem}\label{Sec:BinaryHypLemma}

Recall some general (and standard) definitions for the optimal performance of a binary hypothesis testing between two probability measures $P$ and $Q$ on $W$:
\begin{equation}\label{binary_hypothsis:beta}
  \beta_{\alpha}\BRA{P,Q} = \MIN{\substack{P_{Z|W} :\\ \sum_{w\in W}P(w)P_{Z|W}(1|w) \geq \alpha} } \sum_{w\in W}Q(w)P_{Z|W}(1|w),
\end{equation}
where $P_{Z|W}:W \rightarrow \BRAs{0,1}$ is any randomized test. The minimum is guaranteed to be achieved by the
Neyman--Pearson lemma. Thus, $\beta_{\alpha}\BRA{P,Q}$ gives the minimum probability of error under hypothesis $Q$ given that the probability of error under hypothesis $P$ is not larger than $1-\alpha$. The quantity $\beta$ denotes  the \textbf{power} of the test at \textbf{significance level} $1-\alpha$.

Recall that the optimal test is:
  \begin{equation*}
    P_{Z|W} = \Ind{ \frac{Q(w)}{P(w)} < \lambda} + \delta\cdot\Ind{ \frac{Q(w)}{P(w)} = \lambda},
  \end{equation*}
  where $\lambda, \delta$ are tuned so that $\sum_{w\in W}P(w)P_{Z|W}(1|w) = \alpha$

\begin{lemma}\label{Lemma:BinaryHyp}
The following variational formula holds:
\begin{equation}\label{Formula:Beta:Sup}
  \beta_{\alpha}\BRA{P,Q} = \max_{\lambda}\BRA{\sum_{w\in W} \min\BRA{Q(w),\lambda P(w)} - \lambda\BRA{1-\alpha}}.
\end{equation}
Moreover,
\begin{equation}\label{Formula:Beta}
  \beta_{\alpha}\BRA{P,Q} = \sum_{w\in W} \min\BRA{Q(w),\lambda P(w)} - \lambda\BRA{1-\alpha}
\end{equation}
If and only if:
\begin{equation}\label{Formula:OptimalLambda}
  P\BRAs{w:\frac{Q(w)}{P(w)} < \lambda} \leq \alpha \leq P\BRAs{w:\frac{Q(w)}{P(w)} \leq \lambda}
\end{equation}
\end{lemma}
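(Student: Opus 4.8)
The plan is to start from the Neyman--Pearson characterization of $\beta_\alpha(P,Q)$ and show that the optimal test value equals the right-hand side of \eqref{Formula:Beta:Sup} by a two-sided argument: first that $\beta_\alpha(P,Q)$ is at least the supremum, and then that the supremum is attained by choosing $\lambda$ to satisfy \eqref{Formula:OptimalLambda}. For the ``$\geq$'' direction, I would take an arbitrary $\lambda \geq 0$ and any feasible test $P_{Z|W}$ with $\sum_w P(w)P_{Z|W}(1|w) \geq \alpha$, and bound
\begin{equation*}
\sum_w Q(w)P_{Z|W}(1|w) \geq \sum_w \min(Q(w),\lambda P(w))P_{Z|W}(1|w) \geq \sum_w \min(Q(w),\lambda P(w)) - \lambda\sum_w P(w)(1-P_{Z|W}(1|w)),
\end{equation*}
where the first inequality is trivial and the second uses $\min(Q(w),\lambda P(w)) \leq \lambda P(w)$ together with $0 \leq P_{Z|W}(1|w) \leq 1$. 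Since $\sum_w P(w)(1-P_{Z|W}(1|w)) = 1 - \sum_w P(w)P_{Z|W}(1|w) \leq 1-\alpha$, this yields $\sum_w Q(w)P_{Z|W}(1|w) \geq \sum_w \min(Q(w),\lambda P(w)) - \lambda(1-\alpha)$ for every feasible test and every $\lambda$, hence $\beta_\alpha(P,Q) \geq \max_\lambda(\cdots)$.

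For the reverse direction I would exhibit, for a $\lambda$ satisfying \eqref{Formula:OptimalLambda}, a specific test achieving equality. Take the Neyman--Pearson test that outputs $1$ (declares $P$) exactly on $\{w : Q(w)/P(w) < \lambda\}$, outputs $0$ on $\{w : Q(w)/P(w) > \lambda\}$, and randomizes with some $\delta \in [0,1]$ on the boundary set $\{w : Q(w)/P(w) = \lambda\}$; condition \eqref{Formula:OptimalLambda} is exactly what guarantees a $\delta$ exists making the $P$-error probability equal to $\alpha$ (so the test is feasible). For this test, on $\{Q/P < \lambda\}$ we have $\min(Q(w),\lambda P(w)) = Q(w) = Q(w)P_{Z|W}(1|w)$; on $\{Q/P > \lambda\}$ we have $P_{Z|W}(1|w)=0$ and $\min(Q(w),\lambda P(w)) = \lambda P(w)$; and on the boundary $Q(w) = \lambda P(w)$, so $\min(Q(w),\lambda P(w)) = \lambda P(w) = \delta^{-1}\cdot\delta\lambda P(w)$ matches $Q(w)P_{Z|W}(1|w) + \lambda P(w)(1-P_{Z|W}(1|w))$. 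Summing, $\sum_w Q(w)P_{Z|W}(1|w) = \sum_w \min(Q(w),\lambda P(w)) - \lambda\sum_w P(w)(1-P_{Z|W}(1|w)) = \sum_w \min(Q(w),\lambda P(w)) - \lambda(1-\alpha)$, so the bound from the first direction is tight. Combining the two directions proves \eqref{Formula:Beta:Sup}, and the chain of inequalities also shows \eqref{Formula:Beta} holds iff both inequalities in the first direction are equalities, which forces $P_{Z|W}(1|w)=1$ whenever $Q(w)<\lambda P(w)$ and $P_{Z|W}(1|w)=0$ whenever $Q(w)>\lambda P(w)$ and $\sum_w P(w)P_{Z|W}(1|w)=\alpha$ — precisely the condition \eqref{Formula:OptimalLambda}.

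The main obstacle is the careful bookkeeping on the boundary set $\{w : Q(w)/P(w) = \lambda\}$: making sure the randomization parameter $\delta$ can indeed be chosen in $[0,1]$ (this is where \eqref{Formula:OptimalLambda} is used, and where one must handle the case $P(\{Q/P=\lambda\})=0$ separately), and tracking which inequalities in the ``$\geq$'' chain are saturated so as to get the sharp ``if and only if'' in \eqref{Formula:Beta}--\eqref{Formula:OptimalLambda}. A secondary point to check is the degenerate cases $\lambda = 0$ and $P(w)=0$ or $Q(w)=0$, which are routine but should be mentioned so the $\min$ and the likelihood-ratio thresholds are well-defined.
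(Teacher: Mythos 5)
Your proof is correct, but it takes a genuinely different route from the paper's. You argue by weak duality plus complementary slackness: for every $\lambda\ge 0$ and every feasible test, the chain $\sum_w Q(w)P_{Z|W}(1|w)\ \ge\ \sum_w\min(Q(w),\lambda P(w))P_{Z|W}(1|w)\ \ge\ \sum_w\min(Q(w),\lambda P(w))-\lambda(1-\alpha)$ gives $\beta_\alpha\ge\max_\lambda(\cdots)$, and the Neyman--Pearson test at a $\lambda$ satisfying \eqref{Formula:OptimalLambda} saturates every link. The paper instead first evaluates $\sum_w\min(Q(w),\lambda P(w))-\lambda(1-\alpha)$ directly at the Neyman--Pearson threshold via the sets $A=\{Q/P<\lambda\}$, $B=\{Q/P=\lambda\}$ and the identities $\alpha=P(A)+\delta P(B)$, $\beta=Q(A)+\delta Q(B)$, and then proves separately, by explicit case analysis for $\lambda_1<\lambda$ and $\lambda_1>\lambda$, that the objective is maximized exactly on the interval of $\lambda$'s satisfying \eqref{Formula:OptimalLambda}. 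Your approach is shorter and yields the ``if'' direction and the value of the maximum in one stroke; the paper's approach buys a monotonicity picture of the objective in $\lambda$ and makes the strictness of the decrease off the optimal interval explicit. One step you should spell out in the ``only if'' direction: the equality conditions you extract are conditions on the optimal test (namely $P^*_{Z|W}(1|w)=1$ on $\{Q<\lambda P\}$, $=0$ on $\{Q>\lambda P\}$, and $\sum_w P(w)P^*_{Z|W}(1|w)=\alpha$ when $\lambda>0$); you still need the one-line deduction that these force $P\{Q/P<\lambda\}\le\alpha\le P\{Q/P\le\lambda\}$, by writing $\alpha=P\{Q/P<\lambda\}+\sum_{w:Q=\lambda P}P(w)P^*_{Z|W}(1|w)$. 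You should also dispose of the case $\lambda=0$ (where the third inequality is an equality regardless of the constraint being tight) as you anticipated. With those details filled in, the argument is complete.
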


The next lemma presents another variational formula, where the power of the binary hypothesis testing problem is expressed in terms of the CDF of the log-likelihood ratio.

\begin{lemma}\label{Lemma:BinaryHyp:2}
Let $F(z) = \PRs{P}{\log {P(w) \over Q(w)} \leq z}$ denote the CDF of the log-likelihood ratio with respect to the distribution $P$.
Then:
\begin{align}
  \beta_{1-\alpha}\BRA{P,Q} &= \max_{R}\BRA{\int_R^{\infty}F(z)e^{-z}dz  - e^{-R}\alpha} \label{Formula:Beta:CDF}
\end{align}

Moreover,
\begin{equation}\label{Formula:Beta:2}
  \beta_{1-\alpha}\BRA{P,Q} = \int_R^{\infty}F(z)e^{-z}dz  - e^{-R}\alpha
\end{equation}
If and only if:
\begin{equation}\label{OptimalR}
  P\BRAs{w:\log\frac{P(w)}{Q(w)} < R} \leq \alpha \leq P\BRAs{w:\log\frac{P(w)}{Q(w)} \leq R}
\end{equation}
\end{lemma}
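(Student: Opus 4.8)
The plan is to deduce both statements from Lemma~\ref{Lemma:BinaryHyp} by the change of variable $\lambda = e^{-R}$. Concretely, I would start from \eqref{Formula:Beta:Sup} with $\alpha$ replaced by $1-\alpha$,
\begin{equation*}
  \beta_{1-\alpha}\BRA{P,Q} = \max_{\lambda\geq 0}\BRA{\sum_{w\in W}\min\BRA{Q(w),\lambda P(w)} - \lambda\alpha},
\end{equation*}
and rewrite the inner sum in terms of the log-likelihood ratio $L(w)=\log\frac{P(w)}{Q(w)}$. Every $w$ with $P(w)=0$ contributes $\min(Q(w),0)=0$, so for $\lambda=e^{-R}$ the sum equals $\sum_{w:P(w)>0}P(w)\min\BRA{e^{-L(w)},e^{-R}} = \E{e^{-\max(L,R)}}$, where the expectation is taken under $P$ (and $L(w)=+\infty$ when $Q(w)=0$, consistently with $e^{-\max(L,R)}=0$ there).

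The next step is the identity that produces the CDF: since $e^{-t}=\int_t^{\infty}e^{-z}\,dz$ and, on the range $z\geq R$, one has $z\geq\max(L,R)\iff z\geq L$,
\begin{equation*}
  e^{-\max(L,R)} \;=\; \int_{R}^{\infty}e^{-z}\,\Ind{L\leq z}\,dz .
\end{equation*}
Taking expectations under $P$ and exchanging $\E{\cdot}$ with the integral by Tonelli's theorem (the integrand is nonnegative) yields $\E{e^{-\max(L,R)}}=\int_R^{\infty}F(z)e^{-z}\,dz$. Substituting back gives, for $\lambda=e^{-R}$, the equality of the two objective functions $\sum_w\min(Q(w),\lambda P(w))-\lambda\alpha = \int_R^{\infty}F(z)e^{-z}\,dz - e^{-R}\alpha$. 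As $\lambda$ runs over $(0,\infty)$, $R$ runs over all of $\mathbb{R}$; the only value of $\lambda$ not covered, $\lambda=0$, contributes $0$ and is recovered in the limit $R\to+\infty$, so the maximum over $\lambda\geq 0$ coincides with the maximum over $R\in\mathbb{R}$, which is \eqref{Formula:Beta:CDF}.

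For the ``if and only if'' part I would simply carry the equivalence \eqref{Formula:Beta}$\iff$\eqref{Formula:OptimalLambda} of Lemma~\ref{Lemma:BinaryHyp} (again with $\alpha\to 1-\alpha$) through the same substitution: for $\lambda=e^{-R}$ one has $\{w:Q(w)/P(w)<\lambda\}=\{w:L(w)>R\}$ and $\{w:Q(w)/P(w)\leq\lambda\}=\{w:L(w)\geq R\}$, so taking $P$-probabilities and passing to complements turns $\PRs{P}{Q(w)/P(w)<\lambda}\leq 1-\alpha\leq\PRs{P}{Q(w)/P(w)\leq\lambda}$ into $\PRs{P}{L(w)<R}\leq\alpha\leq\PRs{P}{L(w)\leq R}$, i.e.\ \eqref{OptimalR}. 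The main obstacle is not any deep difficulty but rather the measure-zero bookkeeping that has to be handled cleanly — the atoms $w$ with $P(w)=0$ or $Q(w)=0$, and the countably many jump points of $F$ at which $\Ind{L\leq z}$ and $\Ind{L<z}$ differ (harmless since they form a Lebesgue-null set of $z$) — together with checking that the supremum behaves correctly at the boundary $\lambda=0\leftrightarrow R=+\infty$; beyond that, the only genuinely substantive move is recognizing the identity $e^{-\max(L,R)}=\int_{R}^{\infty}e^{-z}\Ind{L\leq z}\,dz$ and invoking Tonelli to turn it into $\int_R^{\infty}F(z)e^{-z}\,dz$.
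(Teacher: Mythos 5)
Your proof is correct and follows essentially the same route as the paper: substitute $\lambda=e^{-R}$ into Lemma~\ref{Lemma:BinaryHyp} (with $\alpha\to 1-\alpha$), rewrite $\sum_w\min\BRA{Q(w),\lambda P(w)}$ as an expectation under $P$, identify it with $\int_R^{\infty}F(z)e^{-z}\,dz$, and obtain \eqref{OptimalR} from \eqref{Formula:OptimalLambda} by complementation. The only difference is that you derive the key identity $\Es{P}{e^{-\max(L,R)}}=\int_R^{\infty}F(z)e^{-z}\,dz$ from scratch via $e^{-\max(L,R)}=\int_R^{\infty}e^{-z}\Ind{L\leq z}\,dz$ and Tonelli, whereas the paper imports the equivalent statement from \cite[Lemma 2]{ElkayamITW2015}; your version is self-contained and equally valid.
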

The proof of Lemmas 1 and 2 appears in Appendix \ref{App:BinaryHypLemma}.

The following equality may facilitate the expression in \eqref{Formula:Beta:CDF}
\begin{align}
  &\int_R^{\infty}F(z)e^{-z}dz \notag \\
  &= F(R)e^{-R}+\Es{P}{e^{-\log\frac{P(z)}{Q(z)}}\Ind{\log\frac{P(z)}{Q(z)} > R}} \label{Formula:ChangeOfMeasure}
\end{align}
This relation follows by using integration in parts and the Riemann-Stieltjes integral, and it holds for both discrete and continuous distributions. Its proof is straight-forward and omitted due to lack of space.

It is interesting to note that for $R$ such that $F(R)=\alpha$, \eqref{Formula:Beta:2}, \eqref{OptimalR} and \eqref{Formula:ChangeOfMeasure} imply:
  \begin{align}
    \beta_{1-\alpha}\BRA{P,Q} &= \Es{P}{e^{-\log\frac{P(z)}{Q(z)}}\Ind{\log\frac{P(z)}{Q(z)} > R}} \notag \\
    &= \Es{Q}{\Ind{\log\frac{P(z)}{Q(z)} > R}} \notag \\
    &= Q\BRA{w: \log\frac{P(z)}{Q(z)} > R} \label{Beta:ChangeOfMeasure}
  \end{align}
which gives an intuition and serves as a sanity check for the proposed formulas.

\section{Applications}

\subsection{An upper bound on $\beta_{1-\alpha}\BRA{P,Q}$ in terms of the \renyi divergence}

Let:
\begin{equation*}
  g_s\BRA{P,Q} \triangleq \log\BRA{\sum_w P(w)^s Q(w)^{1-s}}
\end{equation*}

For $0 \leq s \leq 1$:
\begin{equation*}
  \min\BRA{Q(w),\lambda P(w)} \leq Q(w)^{1-s}\BRA{\lambda P(w)}^s = \lambda^s e^{g_s}
\end{equation*}
and:
\begin{align*}
\beta_{1-\alpha}\BRA{P,Q} &= \max_{\lambda}\BRA{\sum_{w\in W} \min\BRA{Q(w),\lambda P(w)} - \lambda\alpha} \\
 &\leq \max_{\lambda}\BRA{\lambda^s e^{g_s} - \lambda\alpha} \\
 &= e^{g_s} \max_{\lambda}\BRA{\lambda^s  - \lambda{\alpha e^{-g_s}}}
\end{align*}
Taking the $\log$:
\begin{align*}
 \log\beta_{1-\alpha}\BRA{P,Q} &\leq g_s + \log\BRA{\max_{\lambda}\BRA{\lambda^s  - \lambda{\alpha e^{-g_s}}}} \\
 &\overset{(a)}{\leq} g_s + \frac{s\log(\alpha e^{-g_s})+h_b(s)}{s-1} \\
 &= -{1 \over s-1}g_s + {s \over s-1}\log(\alpha)+\frac{h_b(s)}{s-1}
\end{align*}
where (a) follows by an elementary calculus\footnote{here $h_b(s)$ is the standard binary entropy function in nats,
 ie: $h_b(s)=-slog(s)-(1-s)log(1-s)$}. Note that ${1 \over s-1}g_s = D_s(P||Q)$ is the \renyi divergence. Taking $\alpha=e^{-r}$, and optimize for $s$:
\begin{equation}\label{GeneralExponentialBound}
  \log\beta_{1-e^{-r}}\BRA{P,Q} \leq \inf_{0 \leq s \leq 1}-D_s(P||Q) - {s \over s-1}r+\frac{h_b(s)}{s-1}
\end{equation}
we get the bound on the error exponents.

\subsection{Normal approximations}
Let $G(z)$ be another CDF (\eg\ Gaussian), approximating $F(z)$ with an additive approximation error, \ie\, $$G(z)-d_l\leq F(z) \leq G(z)+d_h$$ for all $z$. Denote by
\begin{equation}\label{Beta:CDF}
  \beta_{1-\alpha}\BRA{F} \triangleq \max_{R}\BRA{\int_R^{\infty}F(z)e^{-z}dz - e^{-R}\alpha}
\end{equation}
the power of the binary hypothesis testing in terms of the CDF of the log-likelihood ratio.
Then:
\begin{align*}
  &\beta_{1-\alpha}\BRA{F} = \beta_{1-\alpha}\BRA{P,Q} \\
  &=\max_{R}\BRA{\int_R^{\infty}F(z)e^{-z}dz  - e^{-R}\alpha} \\
  &\leq \max_{R}\BRA{\int_R^{\infty}\BRA{G(z)+d_h}e^{-z}dz  - e^{-R}\alpha} \\
  &= \max_{R}\BRA{\int_R^{\infty}G(z)e^{-z}dz +d_h e^{-R} - e^{-R}\alpha} \\
  &= \max_{R}\BRA{\int_R^{\infty}G(z)e^{-z}dz - e^{-R}\BRA{\alpha-d_h}} \\
  &= \beta_{1-\alpha+d_h}\BRA{G}
\end{align*}
And similarly:
\begin{align*}
  \beta_{1-\alpha}\BRA{P,Q} \geq \beta_{1-\alpha-d_l}\BRA{G}
\end{align*}

Let $G$ be now a CDF of a Gaussian distribution approximating $F$. In this case $\beta_{1-\alpha}\BRA{G}$ can be evaluated explicitly.
Specifically, let $L(w)=\log\frac{P(w)}{Q(w)}$ and assume $L \sim \mathcal{N}\BRA{D,V}$ under $P$, \ie, the likelihood ratio is distributed normally, then:
\begin{align*}
  G(z) &=  \PRs{P}{w:\log {P(w) \over Q(w)} \leq z} \\
       &= \PRs{P}{w:L(w) \leq z} \\
       &= \PRs{P}{w:\frac{L(w)-D}{\sqrt{V}} \leq \frac{z-D}{\sqrt{V}} } \\
       &= \Phi\BRA{\frac{z-D}{\sqrt{V}}}
\end{align*}
where $\Phi(x)={1 \over \sqrt{2\pi}}\int_{-\infty}^x e^{-t^2/2}dt$. The optimal $R=R_\alpha$ is given by equation \eqref{OptimalR}: $\Phi\BRA{\frac{R_\alpha-D}{\sqrt{V}}} = \alpha$, \ie
\begin{equation}\label{Gaussian:OptimalR}
  R_{\alpha} = D+\sqrt{V}\Phi^{-1}(\alpha)
\end{equation}
We can use \eqref{Beta:ChangeOfMeasure}:
\begin{align}
  &\E{e^{-\log\frac{P(z)}{Q(z)}}\Ind{\log\frac{P(z)}{Q(z)} > R}} \notag \\
  &= \int_{R}^{\infty}e^{-z}\frac{1}{\sqrt{2\pi V}}e^{-\frac{(z-D)^2}{2V}}dz \notag \\
  &\overset{(a)}{=} e^{-D}\int_{R-D}^{\infty}e^{-z}\frac{1}{\sqrt{2\pi V}}e^{-\frac{z^2}{2V}}dz \notag \\
  &= e^{-D}\cdot\int_{R-D}^{\infty}\frac{1}{\sqrt{2\pi V}}e^{-{1\over 2}\frac{(z+V)^2-V^2}{V}}dz \notag \\
  &= e^{-D+V/2}\cdot\int_{R-D+V}^{\infty}\frac{1}{\sqrt{2\pi V}}e^{-{1\over 2}\frac{z^2}{V}}   dz \notag \\
  &= e^{-D+V/2}\cdot \Phi^C\BRA{\frac{R-D+V}{\sqrt{V}}} \label{Beta:ChangeOfMeasure:Gaussian}
\end{align}
where $\Phi^C(x)=1-\Phi(x)$\footnote{$\Phi^C$ is the $Q$ function of the Gaussian distribution.}.
Plugging \eqref{Gaussian:OptimalR} into \eqref{Beta:ChangeOfMeasure:Gaussian} we get:
\begin{align*}
  &\E{e^{-\log\frac{P(z)}{Q(z)}}\Ind{\log\frac{P(z)}{Q(z)} > R_{\alpha}}} \\
  &= e^{-D+V/2}\cdot \Phi^C\BRA{\Phi^{-1}(\alpha)+\sqrt{V}}
\end{align*}
Using the following approximation of $\Phi^C(t)$, \cite[Formula 7.1.13]{abramowitz1964handbook}:
\begin{equation}\label{Q_approx}
  \sqrt{2 \over \pi}\frac{e^{-t^2/2}}{t+\sqrt{t^2+4}} \leq \Phi^C(t) \leq \sqrt{2 \over \pi}\frac{e^{-t^2/2}}{t+\sqrt{t^2+8/\pi}}
\end{equation}
we get for $t>>0$ (\eg\, when $V >> 0$):
$$ \Phi^C(t) \approx \sqrt{2 \over \pi}\frac{e^{-t^2/2}}{2t} = \sqrt{1 \over 2\pi}\frac{e^{-t^2/2}}{t}$$
and so:
\begin{align*}
&e^{-D+V/2}\cdot \Phi^C \BRA{\Phi^{-1}(\alpha)+\sqrt{V}} \\
&\approx e^{-D+V/2}\cdot \sqrt{1 \over 2\pi}\frac{e^{-{\BRA{\Phi^{-1}(\alpha)+\sqrt{V}}^2\over 2}}}{\Phi^{-1}(\alpha)+\sqrt{V}} \\
&\approx {e^{-D-\sqrt{V}\Phi^{-1}(\alpha)}\over \sqrt{V}}
\end{align*}
where $\approx$ denote equality up to a multiplicative constant term.

For the block memoryless hypothesis testing problem, the CDF of the log-likelihood ratio can be approximated via the \beress theorem \cite[Theorem 1.2]{tan2014asymptotic}. In this paper we omit the details and refer the interested reader to the monograph \cite{tan2014asymptotic}. Note also that the results above strengthen the existing results on the finite block length source coding problem as it is an instance of the binary hypothesis testing problem \cite[3.2]{tan2014asymptotic}.

\subsection{Large deviation regime}
In this section we outline the application of the variational formula \eqref{Formula:Beta:CDF} to the analysis of the large deviation regime of the binary hypothesis testing problem. This might simplify the proof of the formula for the error exponent of the power of the binary hypothesis testing problem with general sources \cite[Theorem 2.1]{sun2000hypothesis}.

The setting is as follow. A sequence of distributions $P_n, Q_n$ is given and the limit (if exists) of:
\begin{align}
  E_n(r) &= -{1 \over n}\log\beta_{1-e^{-nr}}\BRA{P_n,Q_n}
\end{align}
is of interest. Let:
\begin{equation}
  F_n(z) = \PRs{P_n}{w:\log{P_n(w) \over Q_n(w)} \leq z}
\end{equation}
and assume that we have a large deviation approximation of $F_n(z)$, \ie:
\begin{equation}\label{Beta:CDF:LargeDev:Estimate}
  e^{-nE_1(z)-n\delta^l_n} \leq F_n(nz) \leq e^{-nE_1(z)+n\delta^h_n}
\end{equation}
The sequences $\delta^l_n, \delta^h_n \toinf 0$ represent the multiplicative approximation error. We refer the reader to \cite{dembo2009large} for the theorems of \cramer and \elisgar showing when such approximations exist.

Let:
\begin{equation}\label{Beta:CDF:LargeDev}
  f_n(r,R) = n\int_{R}^{\infty}e^{-nE_1(z)}e^{-nz}dz - e^{-n(R+r)}
\end{equation}

Then:
\begin{align}
&\beta_{1-e^{-nr}}\BRA{P_n,Q_n} \notag \\
&= \max_{R}\BRA{\int_{nR}^{\infty}F_n(z)e^{-z}dz  - e^{-nR}e^{-nr}} \notag \\
&\overset{(a)}{=} \max_{R}\BRA{n\int_{R}^{\infty}F_n(nz)e^{-nz}dz  - e^{-n(R+r)}} \notag \\
&\leq \max_{R}\BRA{n\int_{R}^{\infty}e^{-nE_1(z)-nz+n\delta^h_n}dz  - e^{-n(R+r)}} \notag \\
&= e^{n\delta^h_n}\max_{R}\BRA{n\int_{R}^{\infty}e^{-nE_1(z)-nz}dz  - e^{-n(R+r+\delta^h_n)}} \notag \\
&= e^{n\delta^h_n}\max_{R}\BRA{f_n(r+\delta^h_n,R)} \label{Beta:LargeDev:UpperBound}
\end{align}
where (a) comes from a change of variables $z=nz'$. The lower bound also follows:
\begin{align}\label{Beta:LargeDev:LowerBound}
&\beta_{1-e^{-nr}}\BRA{P_n,Q_n} \geq e^{-n\delta^l_n}\max_{R}\BRA{f_n(r-\delta^l_n,R)}
\end{align}
Let:
\begin{equation}\label{Def:E2}
  E_{2,n}(r) = - {1 \over n}\log\max_{R}\BRA{f_n(r,R)}
\end{equation}
Combining \eqref{Def:E2}, \eqref{Beta:LargeDev:UpperBound} and \eqref{Beta:LargeDev:LowerBound}:
\begin{equation}
  -\delta^h_n + E_{2,n}(r+\delta^h_n) \leq E_n(r) \leq \delta^l_n + E_{2,n}(r-\delta^l_n)
\end{equation}
So the analysis of $E_{2,n}(r)$ is of interest. Differentiating  $f_n(r,R)$ with respect to $R$:

\begin{align*}
  \frac{d}{dR}f_n(r,R) &\overset{(a)}{=} -ne^{-nE_1(R)}e^{-nR} +ne^{-n(R+r)}  \\
  &= ne^{-nR}\BRA{e^{-nr}-e^{-nE(R)}  }
\end{align*}
where (a) follows by differentiation under the integral sign\footnote{ $\frac{d}{dx}\BRA{\int_{a(x)}^{b(x)} f(x,t)dt} = f(x,b(x))b'(x) - f(x,a(x))a'(x) + \int_{a(x)}^{b(x)} f_x(x,t)dt $}\cite{flanders1973differentiation}.
The optimal $R$ satisfy:
\begin{equation}\label{LargeDeviation:OptimalR}
  r = E(R)
\end{equation}

The asymptotic analysis of $E_{2,n}(r)$ can be carried out using the Laplace method of integration which leads to the asymptotic behavior of $E_n(r)$. We omit the details due to space limitations.

\section{Summary}
Two variational formulas for the power of the binary hypothesis testing problem were derived. The formulas can provide tighter bounds on the power of the test, \eg, in terms of the \renyi divergence. Furthermore, a framework for approximating the power of the optimal test is proposed. Any approximation of the CDF of the log-likelihood ratio will result in an approximation to power of the binary hypothesis testing problem. The approximating CDF should be simple enough to allow the calculation of an approximated (or exact) power of the approximating problem. Specifically, we have shown that for the Gaussian approximation, the exact power of the test can be calculated. In the large deviation regime, the power of the binary hypothesis problem can also be approximated using the proposed framework by utilizing the Laplace's method of integration.

\appendices

\section{Proofs}\label{App:BinaryHypLemma}

\begin{proof}[Proof of lemma \ref{Lemma:BinaryHyp}:]
\mynewline
\underline{Proof of \eqref{Formula:Beta}}

  Let $\lambda, \delta$ be the thresholds for the optimal test, and let:
  $$A = \BRAs{w:\frac{Q(w)}{P(w)} < \lambda} $$
  $$B = \BRAs{w:\frac{Q(w)}{P(w)} = \lambda} $$
  Then:
  \begin{equation}\label{Def:Alpha}
    \alpha = P(A)+\delta P(B)
  \end{equation}
  And:
  \begin{equation}\label{Def:Beta}
    \beta = Q(A)+\delta Q(B)
  \end{equation}
  Multiply \eqref{Def:Alpha} by $\lambda$, subtract \eqref{Def:Beta} and use $Q(B) = \lambda P(B)$:
  $$ \beta-\lambda\alpha = Q(A)-\lambda P(A)$$

  On the other hand:
  \begin{align*}
    \sum_{w\in W} \min\BRA{Q(w),\lambda P(w)} &= \sum_{w \in A}Q(w) + \sum_{w \in A^c}\lambda P(w)\\
    &= Q(A) + \lambda (1-P(A)) \\
    &= Q(A) - \lambda P(A)+\lambda \\
    &=\beta-\lambda\alpha+\lambda
  \end{align*}
  Thus:
  $$ \beta = \sum_{w\in W} \min\BRA{Q(w),\lambda P(w)} -\lambda(1-\alpha)$$

\underline{Proof of the sup formula (smaller $\lambda$):}

Note that the optimal $\lambda$ satisfies the following:
\begin{equation}\label{OptimalLambda}
  P\BRAs{w:\frac{Q(w)}{P(w)} \geq \lambda} \geq 1-\alpha \geq P\BRAs{w:\frac{Q(w)}{P(w)} > \lambda}
\end{equation}
  Let $\lambda_1 < \lambda$:
\begin{align*}
  & \sum_{w\in W} \min\BRA{Q(w),\lambda_1 P(w)} - \sum_{w\in W} \min\BRA{Q(w),\lambda P(w)} \\
         &= \sum_{w\in W:\lambda_1 P(w) < Q(w) < \lambda P(w)} \BRA{\lambda_1 P(w) - Q(w)} \\
         &+\BRA{\lambda_1  - \lambda }\sum_{w\in W:\lambda P(w) \leq Q(w) } P(w) \\
         &\overset{(a)}{\leq} \BRA{\lambda_1  - \lambda }\sum_{w\in W:\lambda P(w) \leq Q(w) } P(w) \\
         &= \BRA{\lambda_1  - \lambda }P\BRAs{w: \frac{Q(w)}{P(w)} \geq \lambda } \\
         &\overset{(b)}{\leq} \BRA{\lambda_1  - \lambda }(1-\alpha)
\end{align*}
where (a) follow from: $\lambda_1 P(w) - Q(w)< 0$, (b) follow from $\lambda_1  - \lambda < 0$ and $P\BRAs{w: \frac{Q(w)}{P(w)} \geq \lambda } \geq 1-\alpha$.
Rearranging the terms:
\begin{align*}
  &\sum_{w\in W} \min\BRA{Q(w),\lambda_1 P(w)} -\lambda_1(1-\alpha) \\
  &\leq \sum_{w\in W} \min\BRA{Q(w),\lambda P(w)}-\lambda(1-\alpha)
\end{align*}
If $\lambda_1$ does not satisfy the condition \eqref{Formula:OptimalLambda}, then:
\begin{itemize}
  \item If $P\BRAs{w:\frac{Q(w)}{P(w)} \leq \lambda_1} < P\BRAs{w:\frac{Q(w)}{P(w)} < \lambda}$, then we are finished because there exist $w_0$ with $P(w_0) > 0$, $\frac{Q(w_0)}{P(w_0)} < \lambda$, and $\frac{Q(w_0)}{P(w_0)} > \lambda_1$, which gives strict inequality in (a) above.
  \item If $P\BRAs{w:\frac{Q(w)}{P(w)} \leq \lambda_1} = P\BRAs{w:\frac{Q(w)}{P(w)} < \lambda}$ then $P\BRAs{w:\frac{Q(w)}{P(w)} \leq \lambda_1} < \alpha$ and we have strict inequality $P\BRAs{w:\frac{Q(w)}{P(w)} < \lambda} < \alpha$, which leads to a strict inequality in (b) above.
\end{itemize}

\underline{Proof of the sup formula (greater $\lambda$):}

For $\lambda_1 > \lambda$ we have:

\begin{align*}
  &\sum_{w\in W} \min\BRA{Q(w),\lambda_1 P(w)} \\
  &=    Q\BRAs{w:Q(w) < \lambda P(w)}\\
  &+Q\BRAs{w:\lambda P(w) \leq Q(w) \leq \lambda_1 P(w)}\\
  &+\lambda_1 P\BRAs{w:Q(w) > \lambda_1 P(w)} \\
  &\overset{(a)}{\leq} Q\BRAs{w:Q(w) < \lambda P(w)}\\
  &+\lambda_1 P\BRAs{w:\lambda P(w) \leq Q(w) \leq \lambda_1 P(w)}\\
  &+\lambda_1 P\BRAs{w:Q(w) > \lambda_1 P(w)} \\
  &=    Q\BRAs{w:Q(w) < \lambda P(w)}+\lambda_1 P\BRAs{w:Q(w) \geq \lambda P(w)}
\end{align*}
where (a) follow upper bounding $Q(w)$ with $\lambda_1 P(w)$.
\begin{align*}
  & \sum_{w\in W} \min\BRA{Q(w),\lambda_1 P(w)} - \sum_{w\in W} \min\BRA{Q(w),\lambda P(w)} \\
  &\leq Q\BRAs{w:Q(w) < \lambda P(w)}\\
  &+\lambda_1 P\BRAs{w:Q(w) \geq \lambda P(w)}\\
  &-Q\BRAs{w:Q(w) < \lambda P(w)}-\lambda P\BRAs{w:Q(w) \geq \lambda P(w)} \\
  &= (\lambda_1-\lambda) P\BRAs{w:Q(w) \geq \lambda P(w)} \\
  &\leq \BRA{\lambda_1-\lambda} (1-\alpha)
\end{align*}
Since $\lambda_1-\lambda > 0$ and $P\BRAs{w:Q(w) \geq \lambda P(w)} \leq 1-\alpha$, we have:
\begin{align*}
  &\sum_{w\in W} \min\BRA{Q(w),\lambda_1 P(w)} -\lambda_1(1-\alpha) \\
  &\leq \sum_{w\in W} \min\BRA{Q(w),\lambda P(w)}-\lambda(1-\alpha)
\end{align*}

If $\lambda_1$ does not satisfy the condition \eqref{Formula:OptimalLambda}, then $P\BRAs{w:\frac{Q(w)}{P(w)} < \lambda} < P\BRAs{w:\frac{Q(w)}{P(w)} < \lambda_1}$ and we are finished because there exist $w_0$ with $P(w_0) > 0$, $\frac{Q(w_0)}{P(w_0)} \geq \lambda$, and $\frac{Q(w_0)}{P(w_0)} < \lambda_1$, which gives strict inequality in (a) above.

\end{proof}


\begin{proof}[Proof of lemma \ref{Lemma:BinaryHyp:2}:]

\begin{align*}
  \sum_{w\in W} \min\BRA{Q(w),\lambda P(w)} &= \sum_{w\in W} \lambda P(w)\min\BRA{{Q(w) \over \lambda P(w)},1} \\
  &= \lambda\Es{P}{\min\BRA{{Q(w) \over \lambda P(w)},1}}
\end{align*}
where we take $\frac{c}{0}=\infty$ for $c \geq 0$ and $0\cdot\infty =0$ in order to handle the case $P(w)=0$ as well.
In \cite[Lemma 2]{ElkayamITW2015} we proved that for any positive random variable $U$:
\begin{equation*}
  \E{\min\BRA{e^R \cdot U,1}}= e^{R}\cdot\int_R^{\infty}F(z)e^{-z}dz
\end{equation*}
where:
\begin{equation*}
  F(z) = \PR{-\log U \leq z}
\end{equation*}
Here we define $U$ as $U(w)={Q(w) \over P(w)}$ and we get:
\begin{equation*}
  \Es{P}{\min\BRA{e^R{Q \over P},1}}= e^{R}\cdot\int_R^{\infty}F(z)e^{-z}dz
\end{equation*}
where:
\begin{align*}
  F(z) &= \PRs{P}{z:-\log {Q(w) \over P(w)} \leq z} \\
  &= \PRs{P}{z:\log {P(w) \over Q(w)} \leq z}
\end{align*}

Taking $\lambda=e^{-R}$ we get:
\begin{align*}
  \sum_{w\in W} \min\BRA{Q(w),\lambda P(w)} &= \lambda\Es{P}{\min\BRA{{Q \over \lambda P},1}}\\
  &= e^{-R}\cdot\Es{P}{\min\BRA{e^{R}{Q \over P},1}}\\
  &= \int_R^{\infty}F(z)e^{-z}dz
\end{align*}

Plugging back in \eqref{Formula:Beta} and writing for $1-\alpha$ instead of $\alpha$:
\begin{align*}
  \beta_{1-\alpha}\BRA{P,Q} &= \max_{\lambda}\BRA{\sum_{w\in W} \min\BRA{Q(w),\lambda P(w)} - \lambda(\alpha} \\
  &= \max_{R}\BRA{\int_R^{\infty}F(z)e^{-z}dz  - e^{-R}\alpha}
\end{align*}
The optimal $\lambda$ satisfy \eqref{Formula:OptimalLambda}. Rewriting the condition in terms of $R$ after some algebra we get:
%
%
\begin{equation}
  P\BRAs{w:\log\frac{P(w)}{Q(w)} < R} \leq \alpha \leq P\BRAs{w:\log\frac{P(w)}{Q(w)} \leq R}
\end{equation}

\end{proof}

\ifFullProofs
\begin{proof}[Proof of \eqref{Formula:ChangeOfMeasure} ]


We use The Riemann–-Stieltjes integral to prove this for both discrete and continuous distributions. We refer to \cite{rudin1964principles} for the properties that we use.
Let $g(z)=e^{-z}$. Then:
\begin{align*}
  \int_R^{\infty}F(z)e^{-z}dz &= -\int_R^{\infty}F(z)g'(z) dz \\
  &\overset{(a)}{=} -\int_R^{\infty}F(z)dg(z) dz \\
  &\overset{(b)}{=} -F(\infty)g(\infty)+F(R)g(R)\\
  &+\int_R^{\infty}dF(z)g(z) dz \\
  &= F(R)e^{-R}+\int_{-\infty}^{\infty}dF(z)g(z)\Ind{z \geq R} dz \\
  &= F(R)e^{-R}+\E{e^{-\log\frac{P(z)}{Q(z)}}\Ind{\log\frac{P(z)}{Q(z)} \geq R}}
\end{align*}
where (a) is \cite[Theorem 6.17]{rudin1964principles}, (b) is by integration by parts \cite[Theorem 6.22]{rudin1964principles},
Hence:
\begin{align*}
  &\beta_{1-\alpha}\BRA{P,Q} \\
  &= \max_{R}\BRA{\int_R^{\infty}F(z)e^{-z}dz  - e^{-R}\alpha} \\
  &= \max_{R}\BRA{F(R)e^{-R}+\E{e^{-\log\frac{P(z)}{Q(z)}}\Ind{\log\frac{P(z)}{Q(z)} \geq R}}  - e^{-R}\alpha}
\end{align*}


\end{proof}
\else

\fi

\bibliographystyle{IEEEtran}
\bibliography{bib}

\end{document}